\newcommand{\RR}{\mathbb{R}}
\newcommand{\ria}{\rightarrow}
\newcommand{\set}[1]{\{#1\}}
\newcommand{\NP}{\mathrm{NP}}
\newcommand{\CV}{\mathcal{V}}
\newcommand{\CE}{\mathcal{E}}
\newcommand{\FC}{\mathfrak{C}}
\newcommand{\score}{\mathrm{score}}
\newcommand{\sub}{\subseteq}
\newcommand{\pos}{\mathrm{pos}}
\newcommand{\Interval}{\mathrm{Interval}}
\newcommand{\Dominance}{\mathrm{Dominance}}
\newtheorem{theorem}{Theorem}[section]
\newtheorem{proposition}[theorem]{Proposition}
\declaretheorem[style=definition,sibling=theorem,qed=$\blacksquare$]{definition}
\declaretheorem[style=definition,sibling=theorem,qed=$\blacksquare$,name=Example]{example}
\begin{document}

\title{Electing a committee with dominance constraints\thanks{Support from the Basic Research Program of the National Research University Higher School of Economics is gratefully acknowledged.}
}


\author{Egor Ianovski}


\maketitle

\begin{abstract}
    We consider the problem of electing a committee of $k$ candidates, subject to some constraints as to what this committee is supposed to look like. In our framework, the candidates are given labels as an abstraction of a politician's religion, a film's genre, a song's language, or other attribute, and the election outcome is constrained by  interval constraints -- of the form ``Between 3 and 5 candidates with label X'' -- and dominance constraints -- ``At least as many candidates with label X as with label Y''. The problem is, what shall we do if the committee selected by a given voting rule fails these constraints? In this paper we argue how the logic underlying weakly-separable and best-$k$ rules can be extended into an ordering of committees, and study the question of how to select the best valid committee with respect to this order. The problem is $\NP$-hard, but we show the existence of a polynomial time solution in the case of tree-like constraints, and a fixed-parameter tractable algorithm for the general case.
\end{abstract}

\section{Introduction}

Perhaps the least controversial desideratum in social choice theory is non-imposition -- the requirement that every candidate can be a winner in at least one profile. Indeed, it is hard to come up with a convincing story why an election designer should allow voters to vote for $a$, while eliminating even the theoretical possibility of $a$ winning, unless he is actively trying to provoke a revolution.

The situation changes when we consider multiwinner elections. When electing assemblies and parliaments, or selecting a movie library or a package of advertisements, we often encounter constitutional or conventional restrictions on which sets are acceptable and which are not. This could be due to equity concerns, such as the twenty-four countries around the world that reserve seats in parliament for women; protection of minority rights, such as the religious seats in Iran or the ethnic seats in Croatia; social stability, such as the Colombian peace agreement which reserved seats for former FARC combatants, or the ancient Roman requirement that one consul be a pleb; credibility, such as bipartisan committees in the United States, or the Cypriot Supreme Court which requires a Greek, Turkish, and a neutral judge; protection of culture, such as the French law requiring that forty percent of songs sung on the radio are in French; and many others.

As these are real life problems they have real life solutions, but these can lack the desirable features we seek in social choice. The parliamentary system of New Zealand uses electoral districting to offer representation based on native, regional, and party status, but the resulting system requires the size of the elected parliament to be slightly adjusted after each election. The Roman republic handled consular restrictions by sequential voting, but that gives undue power to the former of the agenda. Certain organisations \citep{Brams1990} seek to meet representation goals by simply choosing a list of candidates that will incentivise the voters to elect the right number of candidates of each type, but such a procedure offers no guarantees and is of course questionable from a democratic standpoint.

It is thus desirable to use the tools of voting theory to handle this problem, but the standard social choice model cannot handle such constraints; candidates are a list of names, distinguished only by their positions in the voters' preference orders. The function does not have access to the information that failing to elect one of $a$, $d$, or $f$ will lead to unacceptable social tension, or that a library of the highest rated films will not do if all the films are of the same genre.

In this paper we will address this problem by specifying a framework for constraints on the range of a committee selection rule, and considering how to elect the best committee out of those admitted by the constraints. In the absence of a welfare function to maximise, we seek derive what it means for a committee to be the best by considering how the logic underlying a committee selection rule can be extended to a ranking of committees from best to worst.

\subsection{Related work}

The idea that when electing a committee, not every subset of candidates has been present in the approval voting literature for a number of years. \citet{Kilgour2010} and \citet{Kilgour2012} discuss the possibility of restricting the range of an election to satisfy diversity or other constraints, and \citet{Brams1990} detail the authors' attempts to convince an organisation, interested in finding a voting system that elects a committee representative with respect to both the region the candidates come from and the specialist skills they posses, to adopt constrained approval voting.

Within the wider voting community the topic is relatively new. Closest to the present work are \citet{Bredereck2018} and \citet{Celis2018}. Both papers attempt to maximise social welfare in a setting where voters have submodular utility functions over sets of candidates (the special case of an additive utility function corresponds exactly to the weakly-separable case in this paper), but some sets are forbidden due to diversity constraints. Both papers consider what the present work calls \emph{interval constraints}, that is constraints of the form ``between 3 and 5 candidates on the committee should be a Zoroastrian''. \citeauthor{Bredereck2018} in addition to this consider arbitrary numeric constraints, which allows them to handle restrictions like ``either 2 or 7 of the candidates should be a man'', or ``the number of women should be prime''. The main focus of these works are approximation algorithms and hardness results.

Similar in spirit is the work of \citet{Yang2018}, where they seek to elect a committee satisfying some graph property. Given is a graph over the candidates, which represents a binary relation among the candidates (e.g.\ business ties), and thus the problem of electing an independent set can be interpreted as a way to minimise the potential for collusion (or electing a clique to maximise it). Among all such committees, the authors seek to find the one that maximises the score under an approval voting based voting system.

\citet{Lang2018} study the problem of multi-attribute apportionment. Given are the votes for candidates of parliament, each candidate having a number of attributes such as religion, nationality, or gender, and the task is to form a parliament of a fixed size that satisfies membership constraints, or the closest possible parliament if the constraints are unsatisfiable. Similarly, \citet{Aziz2018} also considers the case of soft constraints, where the task is not to meet the requirements exactly, but to find the closest possible committee; in this case, the author's interest is in picking a committee that is justified envy free.

\subsection{Our contribution}

We introduce a new type of constraint -- the dominance constraints -- which require that, ex post, the number of candidates elected from group $A$ is no less than the number from group $B$. We then consider the problem of electing an optimal committee subject to interval and dominance constraints, with ``optimality'' being defined with respect to a set-extension of the order underlying a best-$k$ committee selection rule. We show that this problem is 
solvable in polynomial time if the set-extension satisfies fixed-cardinality responsiveness and the dominance relation induces a tree-like ordering of the candidate labels, and NP-hard in the general case. For arbitrary label structures, we show that for weakly separable scoring rules the problem is fixed-parameter tractable in the number of labels.

\section{Preliminaries}

\subsection{Electing a committee}

An \emph{election} $E=(C,V)$ consists of a set of $m$ \emph{candidates}, $C$, and a set of $n$ \emph{voters}, $V$. Each voter $i$ is associated with a linear order over $C$, $\succ_i$, which we call $i$'s \emph{preference order}.

The problem of electing a committee is to associate an election with a set of candidates of size $k$. Thus a \emph{committee selection rule} is a function that maps an election to one or more size-$k$ committees: the set of tied winners.

A natural way to elect a candidate in the single-winner setting is via a scoring rule -- awarding a candidate a number of points based on where in a voter's preference order they appear. This idea has been extended to committee selection rules \citep{Elkind2017}, but it turns out these \emph{committee scoring rules} are much more formidable beasts than their single-winner counterparts;
the original definition ran to a page and involved maximising a function which was defined with respect to $m\times n$ auxiliary functions, each of which operated on vectors of integers. Fortunately, we do not need to define committee scoring rules in their generality as only the simplest family of them are computable in polynomial time \citep{Faliszewski2016, Faliszewski2018}, and this family is fairly straightforward.

\begin{definition}
A \emph{weakly separable committee scoring rule}, $F$, is defined with respect to a single-winner scoring function. Let $\pos_i(c)$ denote candidate $c$'s position in voter $i$'s preference order, and consider a function, $\gamma:[m]\ria\RR$, mapping positions to scores. We define the score of a candidate as:
$$\score(c)=\sum_{i\in V}\gamma(\pos_i(c)).$$
The score of a committee $X$ is the sum of the score of its members:
$$\score(X)=\sum_{c\in X}\score(c).$$
The winners under $F$ are all committees with maximal score. It is clear that such a committee will consist of $k$ candidates with the highest individual scores.
\end{definition}

\begin{example}
The following five voters wish to elect a committee of size two:
\begin{itemize}
    \item $a\succ_1c\succ_1b\succ_1 d$
    \item $a\succ_2c\succ_2b\succ_2 d$
    \item $d\succ_3c\succ_3b\succ_3 a$
    \item $d\succ_4b\succ_4c\succ_4 a$
    \item $b\succ_5c\succ_5a\succ_5d$
\end{itemize}

The most well-known scoring rule is \emph{single non-transferable vote} (SNTV). Here the single-winner scoring function is plurality: $\gamma(1)=1,\gamma(j\neq 1)=0$. The winning committee is $\set{a,d}$ with a score of 4. Under \emph{$k$-Borda} we use the Borda single-winner scoring function: $\gamma(j)=m-j$. The winning committee is $\set{b,c}$, with a score of 17. Under \emph{Bloc} the single-winner scoring function is $k$-approval ($k$ is the committee size): $\gamma(j\leq k)=1,\gamma(j>k)=0$. The committees $\set{a,c},\set{b,c}$ and $\set{c,d}$ are tied with a score of 6.
\end{example}

Weakly separable committee scoring rules lie at the intersection of two important families identified by \citet{Elkind2017}: committee scoring rules in general, and best-$k$ rules. The former do not interest us due to their complexity, but best-$k$ rules are computable in polynomial time provided the underlying ranking function is.

\begin{definition}
A \emph{best-$k$ rule}, $F$, is a committee selection rule defined with respect to single-winner ranking function $f$ (a function mapping an election to a linear order over the candidates, or a set of such orders in the case of ties).

The committee(s) selected by $F$ consist(s) of the best $k$ candidate in the order(s) produced by $f$.
\end{definition}

Weakly separable rules are precisely the best-$k$ rules where the underlying ranking function is a single-winner scoring rule.

\begin{example}
The following six voters wish to elect a committee of size two:
\begin{itemize}
    \item $a\succ_1b\succ_1d\succ_1 c$
    \item $a\succ_2b\succ_2d\succ_2 c$
    \item $a\succ_3b\succ_3d\succ_3 c$
    \item $a\succ_4c\succ_4d\succ_4 b$
    \item $b\succ_5d\succ_5a\succ_5c$
    \item $c\succ_6d\succ_6a\succ_6b$
\end{itemize}

The \emph{single-transferable vote} (STV) family of committee selection rules falls into the category of best-$k$ rules. The simplest of these involves eliminating candidates with the fewest number of first place votes until $k$ remain. The single-winner ranking function here ranks the candidates in reverse-order of elimination (see \citet{Freeman2014} for an axiomatisation). Using parallel-world tie-breaking, the single-winner ranking function produces the orders $a\succ b\succ c\succ d$ and $a\succ c\succ b\succ d$, making $\set{a,b}$ and $\set{a,c}$ the tied winning committees.

The varieties of STV used in practice typically come with a quota and a transfer rule; any candidate who has more than the quota of first-place votes is immediately put on the committee, and the excess votes they had are reallocated using the transfer rule. This is still a best-$k$ rule, but we need more care to build the single-winner order. Using the Droop quota and Gregory's transfer rule we will allocate all first-place votes in excess of 3 to the second place on a fractional basis, so $1/4$ of a vote gets added to $c$ and $3/4$ to $b$. The single-winner order is $a\succ b\succ c\succ d$, and the winning committee is $\set{a,b}$ uniquely.
\end{example}

\subsection{Acceptable committees}

Our task presupposes that not every set in the range of the committee selection rule is acceptable, due to constraints on candidate membership. We seek to elect the best committee within such constraints. To get anywhere with this, we need to address three points. First, in order to be able to discriminate among groups of candidates, we first need a way to form these candidates into groups. Second, we need to decide on what constraints on committee membership interest us and how to formalise them. Third is the thorny question of what exactly constitutes the ``best" committee.

To discriminate among candidates, we introduce a set of \emph{labels}, $\Lambda=\lambda_1,\dots,\lambda_p$. Each $\lambda_i$ is a subset of $C$. Note that depending on the interpretation of the labels we can expect a different structure of these subsets. People generally have at most one religion, so if the labels are confessional it would be reasonable to expect $\lambda_i\cap\lambda_j=\emptyset$. On the other hand, a film could have very many genres, so we should not impose any restrictions on the label relations. With the region of a candidate, we have a more interesting structure: a politician could at once represent California and Los Angeles, but not California and Texas; the label structure would be a tree. \citet{Bredereck2018} and \citet{Celis2018} consider these structures in more detail, for the purposes of this paper we are only interested in the two extremes -- labels that are disjoint, and labels without restrictions. 

In the most general sense, constraint specifications would take the form of some formal language for identifying a viable subset $S\sub 2^C$, but this generality is neither interesting nor useful. If this language is succinct, it would be computationally intractable, and if it is bloated then the input size to our algorithms would be unreasonably large. Moreover, committee restrictions typically have the limited aims of ensuring representation or balancing power, so the utility of being able to pin down arbitrary subsets is dubious. In our framework we will consider only two types of constraints:
\begin{itemize}
    \item $\Interval(p,q,\lambda_i):$ between $p$ and $q$ members of the committee from $\lambda_i$.
    \item $\Dominance(\lambda_i,\lambda_j):$ at least as many members of the committee are from $\lambda_i$ as from $\lambda_j$.
\end{itemize}
Interval constraints can be used to guarantee or limit the representation of a group in an obvious way. Dominance constraints allow us to balance the power of two or more groups, without imposing restrictions on the rest of the candidates. For example, if we wish to elect a bipartisan committee that balances Democrats and Republicans we could, if $k$ is fixed in advance, use the constraints $\Interval(k/2,k/2,\text{Dem.})$, $\Interval(k/2,k/2,\text{Rep.})$. The disadvantage of this approach is that it a priori excludes the possibility of a third party from having a spot on the committee, whereas the constraints $\Dominance(\text{Dem.}, \text{Rep.})$, $\Dominance(\text{Rep.},\text{Dem.})$ would ensure the number of Republicans and Democrats are the same, regardless of how many independent candidates are elected.

The final issue is to clarify what it means to elect the best possible committee. For weakly-separable rules the answer seems obviously enough -- find an acceptable committee with the highest score -- but with a rule like STV no scores are given, and if the winning committee violates the constraints it is not clear how to proceed.

What we need is some means to rank the committees from best to worst. In the single-winner setting ranking functions (social welfare functions; social preference functions) have been at centre-stage since inception, but the multiwinner literature to date has focused on rules which select a single, or a set of tied, winning committee(s). Indeed, a function that explicitly orders $2^C$ would be impractical; however, it is certainly possible to order committees implicitly. The most natural example of this are the committee scoring rules -- the score given to a committee provides a weak order over all of $2^C$. It is less clear what to do with a best-$k$ rule, but the interpretation of the rule as picking the best $k$ candidates independently of each other suggests an approach based on set-ranking \citep{Barbera2004}.

The idea behind set-ranking is that given an order over individual candidates, if we impose some axioms on how we wish the order on committees to behave we could reduce the space of possibilities to something more manageable. For instance, in the setting of electing the best $k$ candidates a natural criterion is responsiveness: $a\succ b$ implies $X\cup\set{a}\succ X\cup\set{b}$, whenever $a,b\notin X$. \citet{Bossert1995} has shown that responsiveness combined with a neutrality condition restricts the space of possible orders to the rank-ordered lexicographic family. These include the famous rules leximax, where we compare committees by their best elements, in case of a tie the second best element, and so on; leximin, where we compare committees by the worst, then the second-worst, etc; and less intuitive rules such as where we rank committees by the second-best element, break ties by the seventh, and continue in some arbitrary order.

Of interest to this paper is fixed-cardinality responsiveness \citep{Kurihara2017}, a strengthening of responsiveness to act on sets.

\begin{definition}
Let $\succeq$ be an order over $2^C$. We say that $\succeq$ satisfies fixed-cardinality responsiveness if, whenever $X\succeq Y$ for $|X|=|Y|=k$ and $Z\cap X,Y=\emptyset$, then $X\cup Z\succeq Y\cup Z$. 
\end{definition}

\begin{example}
The scores produced by a weakly-separable rule satisfy fixed-cardinality responsiveness. If $X$ has a larger score than $Y$, and we add the same $Z$ candidates to both committees, then we increase both score totals by the same amount, and $\score(X\cup Z)\geq\score(Y\cup Z)$. 

The leximax and leximin extensions also satisfy fixed-cardinality responsiveness. Suppose $X\succeq_{\mathit{lmax}}Y$. Let $x_i,y_i$ denote the $i$th best candidates in $X$ or $Y$, and suppose $x_i\sim y_i$ for $i<j$, and $x_j\succ y_j$. Suppose $z_1,\dots,z_k\in Z$ are the candidates better than $x$. Clearly they are also better than $y$, so when we compare the first best $k+j-1$ elements of $X\cup Z $ and $Y\cup Z$ we are comparing either $x_i$ with $y_i$, or $z_i$ with itself; in either case we are indifferent. When we reach $x_j$ in $X\cup Z $, we compare it against either $y_j$ or $z_{k+1}$, and in either case $x_j$ is better.

Other rank-ordered lexicographic extensions do not. For example, let $\succeq_2$ be the relation that compares the second-best candidate first (and the rest in an arbitrary order). If $a\succeq b\succeq c\succeq d\succeq e$ then $\set{c,d}\succeq_2\set{a,e}$, but $\set{b,c,d}\nsucceq_2\set{a,b,e}$.
\end{example}

We can now formulate the algorithmic problems of interest.

\begin{definition}
The \emph{constrained winner election} problem is the problem that takes as input an election $E$, a set of constraints $\FC$, a set of labels $\Lambda=\lambda_1,\dots,\lambda_p\sub C$, and an oracle\footnote{That is, $R$ is a routine that given two $X,Y\sub 2^C$ will output ``YES'' if $X\succeq Y$, ``NO'' otherwise.} $R$ for a weak order $\succeq\sub2^C\times 2^C$. The output is some $k$-committee $X$ that is maximal with respect to $\succeq$ out of all committees satisfying $\FC$.

The \emph{constrained winner existence} problem, in addition to the above, takes as input a reference committee $X$. The output is ``YES'' if there exists a $Y\succeq X$ that satisfies $\FC$, ``NO'' otherwise..
\end{definition}

It is easy to see that if the labels are unrestricted then simply determining whether there exists a committee satisfying $\FC$ is $\NP$-hard.

\begin{proposition}\label{prop:hardconsistency}
It is $\NP$-hard to determine whether there exists a committee $X$ satisfying a set of constraints $\FC$. Thus the constrained winner existence problem is $\NP$-complete for any onto committee selection rule.
\end{proposition}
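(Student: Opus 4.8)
The plan is to reduce from a standard $\NP$-complete problem, using only interval constraints together with overlapping labels --- which is the right shape for the claim, since the later part of the paper shows the difficulty evaporates when labels are disjoint. A convenient source problem is \textsc{Independent Set}: given a graph $G=(V,E)$ and an integer $k$, decide whether $G$ has an independent set of size $k$. (As \textsc{Independent Set} has a hereditary solution space, ``size $k$'' and ``size at least $k$'' are interchangeable, so demanding a committee of size \emph{exactly} $k$ is not an obstacle.)

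Given an instance $(G,k)$, I would take the candidate set to be $C=V$, the committee size to be $k$, and for every edge $e=\set{u,v}\in E$ introduce a label $\lambda_e=\set{u,v}$. These labels are $2$-element sets that overlap whenever two edges share a vertex, so the label structure is genuinely unrestricted. Let $\FC=\set{\Interval(0,1,\lambda_e):e\in E}$. Then a size-$k$ set $X\sub C$ satisfies $\FC$ precisely when every edge has at most one endpoint in $X$, i.e.\ precisely when $X$ is an independent set of size $k$ in $G$; so a committee satisfying $\FC$ exists iff $G$ has an independent set of size $k$. This construction is polynomial, establishing that deciding consistency of $\FC$ is $\NP$-hard. (One could equally reduce from \textsc{Exact Cover by $3$-Sets}: one candidate per set, one label $\lambda_e$ per ground-set element $e$ collecting the sets containing $e$, committee size $t=|U|/3$, and the constraints $\Interval(1,1,\lambda_e)$; a counting argument then forces any feasible committee to be an exact cover.)

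For the second statement, membership in $\NP$ is immediate: a size-$k$ committee $Y$ is a certificate, verifying $|Y|=k$ and that $Y$ satisfies $\FC$ is polynomial, and checking $Y\succeq X$ is a single call to the oracle $R$. For $\NP$-hardness, I would reuse the instance just built (paired with an arbitrary election $E$ on $C$, whose preferences play no role) and let $\succeq$ be a weak order with respect to which the reference committee $X$ is maximal --- the simplest choice being the order that makes all $k$-committees indifferent, which is exactly the order induced by the onto committee selection rule declaring every $k$-committee a tied winner; for a given onto rule one instead plants $X$ as a $\succeq$-minimal committee, which for best-$k$ and weakly-separable rules is realisable by choosing $E$ appropriately. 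In either case ``there exists $Y\succeq X$ satisfying $\FC$'' collapses to ``there exists $Y$ satisfying $\FC$'', so the reduction of the first part carries over, giving $\NP$-completeness.

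The reduction is essentially routine, so there is no single deep obstacle; the two points that need attention are (i) ensuring the manufactured labels really do overlap, so that the hardness argument is silent about the disjoint-label regime, and (ii) in the existence problem, choosing the order and reference committee so that $Y\succeq X$ is a vacuous condition, which is exactly what lets the pure feasibility question be embedded into it.
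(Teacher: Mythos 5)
Your reduction is correct and is essentially the dual of the paper's: the paper also builds one candidate per vertex and one two-element label per edge, but reduces from Vertex Cover using the lower-bound constraints $\Interval(1,2,\lambda_e)$ (``at least one endpoint in the committee''), where you reduce from Independent Set using the upper-bound constraints $\Interval(0,1,\lambda_e)$; both arguments are sound and of the same shape, with overlapping labels doing the work. The one substantive difference is coverage: the paper gives a \emph{second} reduction showing that feasibility remains $\NP$-hard when $\FC$ contains only dominance constraints (edge-labels dominating every singleton label, so that any nonempty committee must hit every edge), which matters because dominance constraints are the paper's novel ingredient and a reader of your proof could wonder whether the hardness is an artefact of interval constraints alone. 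Your proof does establish the proposition as literally stated, so this is a weakening rather than a gap. On the second sentence, you actually supply more than the paper does: the paper's ``Thus'' leaves implicit both $\NP$ membership and how feasibility embeds into the existence question, whereas your choice of a $\succeq$-minimal reference committee $X$ (so that $Y\succeq X$ is vacuous) makes the implication explicit; just note that for a completely arbitrary onto rule one still has to argue that such a minimal reference committee can be produced in polynomial time, a point the paper itself also glosses over.
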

\begin{proof}
For interval constraints we reduce from vertex cover -- given a graph $G=(\CV,\CE)$ define a candidate for every $v\in\CV$ and a label $\lambda_i$ for every edge $e_i\in\CE$, consisting of the vertices incident on $e_i$. Introduce an $\Interval(1,2,\lambda_i)$ constraint for each such label.

A similar reduction works for dominance constraints. Define a candidate for every vertex and a label for every edge as before, as well as a label for every singleton. For every edge-label $\lambda_i$, introduce the constraints $\Dominance(\lambda_i,\set{c_1})$, \dots, $\Dominance(\lambda_i,\set{c_m})$. Since at least one candidate must be elected to form a size $k$ committee, the constraints establish that at least one vertex from every edge must be chosen.
\end{proof}

It is thus necessary to restrict the problem to tractable cases. We consider two of these: where the labels are disjoint (corresponding to the 1-layered case of \cite{Bredereck2018}, and the $\Delta=1$ case of \cite{Celis2018}), and where there is only a small number of labels.

\section{Disjoint labels and knapsack}

It turns out that in the presence of dominance constraints, the case of disjoint labels is difficult even for the simplest rules.

\begin{theorem}\label{thm:hardknapsack}
The constrained winner existence problem for weakly separable voting rules with disjoint labels is $\NP$-complete, even for SNTV.

For Bloc, it is $\NP$-complete even with a constant number of voters.
\end{theorem}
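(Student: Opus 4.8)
Membership in $\NP$ is straightforward for both rules: the order $\succeq$ here is simply ``compare total scores'', which is polynomial-time computable, so one guesses a committee $Y$ and checks in polynomial time that it satisfies $\FC$ and that $\score(Y)\ge\score(X)$. All the work is in the hardness. I would reduce from CLIQUE --- given a graph $G=(\CV,\CE)$ and an integer $t$, decide whether $G$ has a clique of size $t$ --- rather than from a knapsack-flavoured problem: a reduction from, say, subset sum would have to encode the input numbers either as candidate scores (hence as numbers of voters) or as numbers of candidates, and with binary encoding that is an exponential blow-up, so a purely combinatorial source looks more promising.

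For SNTV, introduce a candidate $c_v$ for each $v\in\CV$ and $c_e$ for each $e\in\CE$, and make each of these its own singleton label, so the labels are disjoint. For each edge $e=\set{u,v}$ add the constraints $\Dominance(\set{c_u},\set{c_e})$ and $\Dominance(\set{c_v},\set{c_e})$, whose combined effect on a committee $Y$ is precisely ``$c_e\in Y$ implies $c_u,c_v\in Y$''. Set the committee size to $k=t+\binom{t}{2}$, give one voter per edge with that edge's candidate first, so that $\score(c_e)=1$ and $\score(c_v)=0$ under SNTV, and let $X$ be any $k$-committee consisting of $\binom{t}{2}$ edge-candidates and $t$ vertex-candidates, so $\score(X)=\binom{t}{2}$ (if $\CV$ or $\CE$ is too small for this, there is no $t$-clique and we output a fixed ``no'' instance). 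Correctness is a counting argument: a valid committee $Y$ has $k$ members, say $x$ vertex-candidates and $y$ edge-candidates; if $\score(Y)\ge\binom{t}{2}$ then $y\ge\binom{t}{2}$, so $x=k-y\le t$; but the $y$ selected edges all lie inside the $x$-vertex set chosen by $Y$, and a graph on $x\le t$ vertices has at most $\binom{x}{2}\le\binom{t}{2}$ edges, forcing $y=\binom{t}{2}$, $x=t$, and hence a $t$-clique. The converse direction is immediate.

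For Bloc the same gadget works, but the constant-voter requirement is the real obstacle: Bloc always distributes exactly $nk$ score points, so with $O(1)$ voters we cannot have $\omega(k)$ candidates of positive score. I would first reduce to the version of CLIQUE with $t=|\CV|/2$ (pad $G$ with universal vertices if $|\CV|>2t$, with isolated vertices otherwise), which forces $|\CE|<4k$. Then use exactly $n=4$ Bloc voters, add a pool of $4k-|\CE|\ge 0$ dummy candidates sitting in one extra label $\mu$ carrying the constraint $\Interval(0,0,\mu)$ (so no valid committee contains a dummy), and let each voter's top $k$ be one block of a partition of the $4k$ active candidates --- all edge-candidates together with all dummies --- into four blocks of size $k$. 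This gives every edge-candidate and every dummy a Bloc score of $1$ and every vertex-candidate a score of $0$, so, with $X$ chosen as before, the counting argument goes through verbatim on committees that avoid $\mu$.

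The step I expect to be the genuine difficulty is the Bloc case --- fitting the construction into a constant number of voters. This is what forces the preprocessing that bounds $|\CE|$ in terms of $k$ and the dummy-candidate device that absorbs Bloc's rigid score budget while contributing nothing to any valid committee; one must check that the resulting score vector is actually realizable, which the explicit four-block partition above does. The SNTV part, being essentially a ``prescribed-cardinality maximum closure'' gadget, should be routine by comparison.
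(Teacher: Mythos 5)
Your proposal is correct and follows essentially the same route as the paper: a reduction from CLIQUE with a candidate per vertex and per edge, singleton disjoint labels, dominance constraints $\Dominance(\set{c_u},\set{c_e})$, $\Dominance(\set{c_v},\set{c_e})$ forcing an edge onto the committee only together with its endpoints, and a counting argument showing that score $\binom{t}{2}$ on a committee of size $t+\binom{t}{2}$ is attainable only by a $t$-clique, with the Bloc case handled in both arguments by padding the graph so the edge count is $O(k)$ and using forbidden filler candidates so that a constant number of voters can realize the required scores. The only cosmetic differences are that you feed the reference committee $X$ in directly (legitimate, since the problem definition does not require $X$ to satisfy $\FC$) where the paper instead builds a reference label $\lambda_{\mathit{ref}}$ with an $\Interval(0,0,\lambda_{\mathit{ref}})$ constraint and extra voters, and your four-voter block partition for Bloc is spelled out somewhat more explicitly than the paper's three-voter construction.
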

\begin{proof}
Let $(G,k)$ be an instance of clique, $G=(\CV,\CE)$. Define an SNTV election with a candidate for every vertex and every edge. For every edge candidate $e_i$, define a voter that ranks $e_i$ first and the rest arbitrarily. Define a label for each singleton. For every $e_i=(v_1,v_2)\in \CE$, add the constraints $\Dominance(\set{v_1},\set{e_i})$ and $\Dominance(\set{v_2},\set{e_i})$.

Define an additional $k + k(k-1)/2$ candidates, all in a new label $\lambda_{\mathit{ref}}$, and $k(k-1)/2$ voters, each of whom ranks an arbitrary candidate in $\lambda_{\mathit{ref}}$ first, and the constraint $\Interval(0,0,\lambda_{\mathit{ref}})$.

We claim that there's a winning committee of size $k + k(k-1)/2$ that's at least as good as the committee consisting of $\lambda_{\mathit{ref}}$ (that is, a committee with score at least $k(k-1)/2$), if and only if $G$ has a clique of size $k$.

First observe that the requirement that $e_i=(v_1,v_2)$ is on a committee only if $v_1$ and $v_2$ are on the committee establishes that the committee is a subgraph -- edges cannot be present without their incident vertices. From this we can establish that no committee of size $k + k(k-1)/2$ can have more than $k(k-1)/2$ points, as that would represent a graph with more than $k(k-1)/2$ edges and less than $k$ vertices.

In order to have $k(k-1)/2$ points, then, we need to have $k$ vertices and $k(k-1)/2$ edges, and this can only be a complete graph of order $k$, that is to say a clique.

For Bloc, we first claim that clique remains hard if we restrict ourselves to the case where a clique of size $k$ contains at least half the edges of the graph, i.e. $2{k\choose 2}\geq {|\CV|\choose 2}$. To see that this is the case, given an instance of clique $(G,k)$ expand $G$ into $G'$ by adding $3|\CV|$ new vertices, adjacent to each other and to every vertex in $|\CV|$. Clearly, $G'$ contains a clique of size $k+3|\CV|$ if and only if $G$ contains a clique of size $k$, and one can verify that $2{{k+3|\CV|}\choose 2}\geq {4|\CV|\choose 2}$ for $k\geq 2$.

Consider then an instance of clique with $2{k\choose 2}\geq {|\CV|\choose 2}$. Define a candidate for every vertex, a candidate for every edge, $k+k(k-1)/2$ candidates in tribe $\lambda_{\mathit{ref}}$, and $k+k(k-1)/2$ candidates in tribe $\lambda_{\mathit{dum}}$. Define one voter that ranks $k(k-1)/2$ edges in the top positions in any order, then the other candidates. The second voter will rank the remaining edges first, then the candidates in $\lambda_{\mathit{dum}}$, then the other candidates. The third voter will rank an arbitrary $k(k-1)/2$ candidates in $\lambda_{\mathit{ref}}$ first, then the dummy candidates, then the rest.

Add the constraints $\Interval(0,0,\lambda_{\mathit{dum}})$ and $\Interval(0,0,\lambda_{\mathit{ref}})$, and for every edge $e=(v_1,v_2)$, $\Dominance(\set{v_1},\set{e})$, $\Dominance(\set{v_2},\set{e})$. From hereon replicate the argument for SNTV.
\end{proof}

While this is an inauspicious beginning, the similarity of the problem to partial-order knapsack \citep{Johnson1983} suggests a course of attack. Partial-order knapsack is the problem of solving a knapsack instance where a partial-order over the items specifies in what order the items can be taken; this problem is strongly $\NP$-complete in the general case, but a pseudopolynomial solution exists when the partial-order is a tree. If we limit the admissible dominance constraints in a similar way, we can obtain a fully polynomial solution to the constrained election problem.

\begin{definition}
The \emph{dominance graph} associated with a constrained election instance $(E,\FC,\Lambda,R)$ is the transitive closure of the graph which has a vertex for every label, and a directed edge $(\lambda_x,\lambda_y)$ for every $\Dominance(\lambda_x,\lambda_y)$ constraint.

A dominance graph is \emph{tree-like}, if there exist no vertices $x,y,z$ such that there exist edges $(x, z)$ and $(y, z)$, but there is no edge between $x$ and $y$.
\end{definition}
The dominance graph captures all the information imparted by the dominance constraints -- whenever $(\lambda_x,\lambda_y)$ is an edge then either a direct constraint, or a consequence of several constraints, requires us to take at least as many candidates from $\lambda_x$ as from $\lambda_y$.

We can now formulate our main result.

\begin{theorem}\label{thm:easyknapsack}
Consider a constrained winner election problem with the following properties:
\begin{enumerate}
    \item The order over committees, $\succeq$, satisfies fixed-cardinality responsiveness.
    \item $\succeq$ is computable in polynomial time.
    \item $\Lambda$ consists of disjoint labels.
    \item The dominance graph induced by the constraints is tree-like.
\end{enumerate}
The problem is solvable in polynomial time.
\end{theorem}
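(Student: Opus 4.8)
The plan is to recast the problem as a tree-structured knapsack over the labels, using fixed-cardinality responsiveness (FCR) to make ``partial committees'' directly comparable through the oracle $R$, so that a dynamic program may keep only one partial solution per state.

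\textbf{Reduction to count profiles and a per-label greedy step.} Since the labels are disjoint, partition $C$ into $\lambda_1,\dots,\lambda_p$ together with the unlabelled candidates $\lambda_0:=C\setminus\bigcup_i\lambda_i$; a committee is then specified by a subset $S_j\sub\lambda_j$ for each $j$ with $\sum_j|S_j|=k$. First I would aggregate all $\Interval$ constraints bearing on a fixed label into a single admissible range of counts, and record the $\Dominance$ constraints as the dominance graph $D$. Next, for each label $\lambda_j$, sort its members $c^j_1\succeq c^j_2\succeq\cdots$ by querying $R$ on singletons only (possible in polynomial time since $\succeq$ is a complete weak order). A one-element-at-a-time exchange argument using FCR then shows that among all size-$t$ subsets of $\lambda_j$, the prefix $B_j(t):=\set{c^j_1,\dots,c^j_t}$ is $\succeq$-maximal: if $S\neq B_j(t)$, swap some $c^j_{i}\in S$ for an unused $c^j_{m}\in B_j(t)$ with $m<i$; since $\set{c^j_m}\succeq\set{c^j_i}$ and $S\setminus\set{c^j_i}$ is disjoint from both singletons, FCR gives $(S\setminus\set{c^j_i})\cup\set{c^j_m}\succeq S$, and iterating with transitivity reaches $B_j(t)$. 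Consequently, once a \emph{count profile} $(t_0,\dots,t_p)$ with $\sum_j t_j=k$ is fixed, the best committee with that profile is $\bigcup_j B_j(t_j)$: replacing each $S_j$ by $B_j(t_j)$, one label at a time, never hurts, again because the untouched parts form a disjoint set and FCR applies. So it suffices to search over admissible count profiles, where admissibility means each $t_j$ lies in its aggregated interval, $\sum_j t_j=k$, and $t_x\geq t_y$ for every edge $(\lambda_x,\lambda_y)$ of $D$.

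\textbf{Structure of admissible profiles.} Contract the strongly connected components of $D$: inside an SCC all counts are forced equal, so the SCC behaves like one ``super-label'' whose menu consists of unions of equal-size prefixes of its constituent labels (a size-$t$ choice in a size-$r$ SCC contributes $rt$ members). The contracted graph is now a partial order, and the tree-like hypothesis — no $z$ with two incomparable predecessors — forces every element's set of (strict) dominators to be a chain, hence the Hasse diagram of the contracted order is a forest: each label has at most one immediate dominator. Isolated labels and $\lambda_0$ are singleton trees. Thus admissible profiles are exactly: pick a count from the admissible range of each node so that a child's count never exceeds its parent's along each tree, with the counts summing to $k$.

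\textbf{The dynamic program.} Process each tree bottom-up. The state for the subtree rooted at a label $v$ is the pair (total number of committee members used inside the subtree, count assigned to $v$), and its value is a single $\succeq$-maximal committee fragment realising the state, selected by comparing candidates via $R$. Combining the children of $v$ is a secondary knapsack (state: set of children already merged, members used so far); combining the roots of the forest together with $\lambda_0$ is a final knapsack forcing the total to equal $k$. The number of states is polynomial (counts and budgets are bounded by $k$, and there are $O(p)$ subtrees), and each transition uses only polynomially many oracle calls, giving a polynomial-time algorithm that returns a $\succeq$-maximal feasible committee or reports infeasibility if no profile sums to $k$. The step I expect to be the crux — and the one I would write out in full — is the soundness of keeping a single fragment per state: two fragments for the same state have equal cardinality, and fragments coming from disjoint subtrees use disjoint candidates (disjoint labels!), so by FCR a fragment judged $\succeq$-worse than another \emph{in its own state} can never be part of a strictly better full committee, and a hybrid/transitivity argument shows that gluing together fragments that are locally $\succeq$-maximal yields a globally $\succeq$-maximal committee. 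This is precisely where hypotheses (1) and (3) are used; completeness of $\succeq$ is needed so that ``$\succeq$-maximal in the state'' is well defined, and some routine care is required for the SCC contraction and for the possibility that an aggregated interval is empty (immediate infeasibility).
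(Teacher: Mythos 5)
Your proposal is correct in substance and its core engine is the same as the paper's: contract the strongly connected pieces of the (transitively closed) dominance graph into ``super-labels'' with forced equal counts, observe that tree-likeness makes the contracted order a forest, use fixed-cardinality responsiveness plus disjointness to argue (i) that within a label the best size-$t$ choice is the top-$t$ prefix of the singleton ranking recovered from the oracle, and (ii) that one $\succeq$-maximal fragment per state suffices, then run a bottom-up knapsack-style dynamic program. Where you genuinely diverge is the treatment of interval constraints: the paper first eliminates them by a preprocessing transformation --- upper bounds are propagated down the dominance graph and enforced by deleting all but the best $q$ candidates of a label, lower bounds are propagated up and enforced by replacing the order $R$ with a modified order $R^*$ that lexicographically prioritises committees containing more ``obligatory'' candidates (and one must check $R^*$ again satisfies fixed-cardinality responsiveness) --- after which the dominance-only DP is invoked as a black box and the output is checked against $\FC$. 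You instead fold the intervals directly into the DP as admissible count ranges per node, which is arguably more direct, avoids the $R^*$ construction and its verification, and also handles unlabelled candidates and a forest of roots explicitly, which the paper glosses over. One bookkeeping point you should make explicit: the secondary knapsack that merges the children of $v$ cannot have state ``(children merged, budget used)'' alone --- to enforce the parent--child dominance it must also carry the cap on the children's root counts (equivalently, be computed per fixed $t_v$, or per bound $q$ as in the paper's tables $C_x[p,k',q]$ and $M_x[k',q]$); with your states indexed by exact root count this is an easy repair, but as written the constraint $t_v\geq t_c$ is not enforced by the recurrence you sketch. With that fixed, your argument that locally maximal fragments glue into a globally maximal feasible committee (same cardinality, same root count, disjoint supports, then FCR and transitivity) is exactly the paper's correctness argument.
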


We will solve the problem in two steps. First we will construct an algorithm that works with dominance constraints only, then we will show how to use this algorithm as a subroutine to solve the case with interval constraints.

\begin{proof}[dominance constraints only]
As a preprocessing step we will construct the dominance graph and identify every maximal clique in it. Taking the transitive closure of the dominance constraints can be done in polynomial time, e.g.\ by depth-first search, and in a transitive graph clique detection reduces to cycle detection. For convenience, we will use the language of trees -- we will refer to these maximal cliques as nodes; nodes which have no outgoing edges to other nodes as leaves; the node with no incoming edges as the root; and whenever $x$ has an outgoing edge to $y$, and there is no $z$ such that $x$ has an edge to $z$, and $z$ to $y$, we say $y$ is the child of $x$.

The algorithm is based on dynamic programming. Every node in the dominance-graph, $x$, is associated with the table $M_x[\cdot,\cdot]$ which maps two integer parameters to a committee -- $M_x[k',q]$ contains the best size $k'$ committee that can be built from the subgraph rooted at $x$, at most $q$ members of which come from the root of $x$. Thus the optimal committee will be the entry $M_r[k,k]$, where $r$ is the root of the entire graph.

In the base case we consider the leaf nodes of the graph. Such a node $x$ is a clique of $p$ labels, and its table is filled as follows:
\[
M_x[k',q]=\begin{cases}
\text{The best $r$ candidates from each of the $p$ labels,}&k'=rp,q\geq k'\\
\mathtt{null},&\text{else.}
\end{cases}
\]
In this case the optimal committee consists of the best $r$ candidates from every label. If $q<k'$, or $k'\neq rp$, no such committee is possible, and we mark this with a special symbol $\mathtt{null}$; for the inductive cases, the union of $\mathtt{null}$ with any committee evaluates to $\mathtt{null}$. 

In the general case we have a node $x$ with $t$ children. Here we have the auxiliary tables $R_x$ and $C_x$, $R[k',q]$ containing the best size $k'$ committee taken from $x$ alone, $C[p,k',q]$ the best size $k'$ committee taken from the leftmost $p$ children of $x$.

The table $R_x$ is filled in the same way as in the base case.
\[
R_x[k',q]=\begin{cases}
\text{The best $r$ candidates from each of the $p$ labels,}&k'=rp,q\geq k'\\
\mathtt{null},&\text{else.}
\end{cases}
\]

The entries $C[1,k',q]$ contain the best committee from the leftmost child, $y$, alone:
\[
C_x[1,k',q]=M_y[k',q].
\]
To pick the best size $k'$ committee from the leftmost $p+1$ children, we have to take $r$ candidates from the $(p+1)$th child, $y$, and $k'-r$ children from the remaining $p$:
\[
C_x[p+1,k',q]=\max_{s\leq k'}(M_y[s,q]\cup C_x[p,k'-s,q]).
\]
The $\max$ operator means to pick the best committee with respect to $\succeq$.

The primary table of node $x$ is then the best way to take $s$ candidates from the root and $k'-s$ from the children, with the proviso that we cannot take more than $s$ candidates from the root nodes of the children.
$$M_x[k',q]=\max_{s\leq \min(k',q)}(R[s,q]\cup C[t,k'-s,s]).$$

This completes the construction. Let us verify that it can be done in polynomial time.

The base case tables of $M_x$ for leaf nodes and $R_x$ for inner nodes contain $k^2$ entries, and to fill an entry all we need to know is the best $k$ candidates -- and we can recover $\succeq$ over singletons with $m^2$ queries.

In a node with $t$ children, $t\leq m$ so the table $C_x$ has no more than $mk^2$ entries. The entries $C_x[1,k',q]$ are filled in constant time, the case $C_x[p+1,k',q]$ requires us to find the best of $k'\leq k$ possible committees, which can be done with at most $k$ queries of $\succeq$.

The inductive case tables $M_x$ have $k^2$ entries, and to fill each we need at most $k$ queries of $\succeq$.

Now let us verify that the construction is correct. The base case tables of $M_x$ for leaf nodes and $R_x$ for inner nodes are correct from fixed-cardinality responsiveness. Likewise for the entries $C_x[1,k',q]$.

For entries $C_x[p+1,k',q]$, let $X$ be the best size $k'$ committee taken from the leftmost $p+1$ children, and suppose it contains $s$ candidates from the $(p+1)$th child. Let $X_1$ be those candidates on the committee from the leftmost $p$ children, and $X_2$ from the $(p+1)$th.

Let $Y$ be the best size $k'-s$ committee taken from the leftmost $p$ children and $Z$ the best size $s$ committee from the $(p+1)$th child. By fixed-cardinality responsiveness, $X_1\cup Z\succeq X_1\cup X_2$, so $X\sim X_1\cup Z$, and $Y\cup Z\succeq X_1\cup Z$, so $X\sim Y\cup Z$. That is, the best size $k'$ committee consists of the best size $k'-s$ committee from the first $p$ children and the best size $s$ committee from the $(p+1)$th child. By induction, these are precisely the committees in $C_x[1,k'-s,q]$ and $M_y[s,q]$.

Likewise for the inductive case $M_x$ tables.
\end{proof}

The final step is managing the interval constraints. The upper bounds imposed by these constraints will be handled by throwing away the excess candidates, the lower bounds by modifying the order over the committees.

\begin{proof}[interval constraints]
Consider a constrained winner election problem instance $P=(E,\FC, \Lambda, R)$ where the dominance graph is tree-like. We will transform it into an instance $P^*=(E^*,\FC^*, \Lambda^*, R^*)$ without interval constraints, with the property that if there exists a solution to $P$, then the solution to $P^*$ is also a solution to $P$. We can then solve $P$ in polynomial time by solving $P^*$, and verifying whether the solution to $P^*$ satisfies the constraints of $P$.

We will call a constraint $\Interval(p,q,\lambda_x)$, with $q<|\lambda_x|$ an \emph{upper-bound}. Our first step will be to remove all upper bounds from $P$. Observe that if we cannot take more than $q$ candidates from $\lambda_x$, then it follows that we cannot take more than $q$ candidates from any descendant of $\lambda_x$ in the dominance graph. We will thus begin by propagating these constraints through the graph: for every upper bound $\Interval(p,q,\lambda_x)$, and every descendant of $\lambda_x$, $\lambda_y$, add the constraint $\Interval(0,q,\lambda_y)$. Next, for every label $\lambda_x$, let $\Interval(p,q,\lambda_x)$ be the upper bound for $\lambda_x$ with the smallest $q$. Remove all but the best $q$ candidates from $\lambda_x$. Observe that every constraint of the form $\Interval(0,q,\lambda_x)$ is now redundant and can be removed. The resulting problem, $P'=(E',\FC', \Lambda', R)$, will have no upper-bounds, and by fixed-cardinality responsiveness the solution to $P'$ must also be a solution to $P$.

Next, consider constraints $\Interval(p,q,\lambda_x)$, $p>0$. Clearly, if we are to include at least $p$ elements from $\lambda_x$ on the committee, we must include at least $p$ from every ancestor of $\lambda_x$ in the dominance graph. Thus we propagate the constraints $\Interval(p,m,\lambda_x)$ upwards.
Next, for every label $\lambda_x$ let $\Interval(p,q,\lambda_x)$ be the constraint with the largest $p$; call this $p$ the \emph{lower-bound} for $\lambda_x$. Observe that if the committee has at least the lower-bound of candidates from $\lambda_x$, it will satisfy all the interval constraints involving $\lambda_x$.

Call the best lower-bound candidates from every $\lambda_x$ the \emph{obligatory} candidates. Let $R^*$ function as follows: if $X$ and $Y$ both contain the same number of obligatory candidates, then $X\succeq_{R^*}Y$ whenever $X\succeq_R Y$. Otherwise, $X\succeq_{R^*}Y$ if and only if $X$ contains at least as many obligatory candidates as $Y$. Clearly $R^*$ is computable in polynomial time whenever $R$ is. Observe further that $R^*$ satisfies fixed-cardinality responsiveness -- consider $X\succeq_{R^*}Y$, and some $Z$ disjoint with $X,Y$. If $X$ has more obligatory candidates than $Y$, then $X\cup Z$ will have more than $Y\cup Z$; if $X\succeq_{R^*} Y$ then $X\cup Z\succeq_{R^*}Y\cup Z$ by the definition of $R^*$. If $X$ had the same number of obligatory candidates as $Y$, then it must mean $X\succeq_R Y$, and by fixed-cardinality responsiveness of $R$, $X\cup Z\succeq_{R}Y\cup Z$. Since $X\cup Z$ and $Y\cup Z$ will have the same number of obligatory candidates, it follows that $X\cup Z\succeq_{R^*}Y\cup Z$.

Now let $P^*=(E',\FC^*,\Lambda',R^*)$, where $E'$ is the initial candidates minus the ones removed for the upper-bounds, $\FC^*$ consists of the dominance constraints of $\FC$, $\Lambda'$ is the restriction of $\Lambda$ to the remaining candidates in $E'$, and $R^*$ is the order just described. $P^*$ satisfies all the conditions of \autoref{thm:easyknapsack}, so we can find a solution, $X$, in polynomial time. If this solution satisfies $\FC$ but is not a solution to the original problem, it must follow that the solution to the original either contains deleted candidates or does not contain all the obligatory candidates -- but that contradicts fixed-cardinality responsiveness.

If $X$ does not satisfy the constraints in $\FC$, then this must mean that $X$ does not contain all the obligatory candidates. Since $R^*$ would prioritise any committee that has these candidates over those that do not, it follows that no committee which contains all the obligatory candidates can satisfy the dominance constraints, hence the original problem has no solution.
\end{proof}

\section{Small number of labels and fixed-parameter tractability}

If the number of label is low we can obtain a polynomial time solution for the case of weakly-separable rules via mixed integer linear programming, using similar techniques to the result for top-$k$ counting rules by \cite{Faliszewski2018} and Theorem 10 of \cite{Bredereck2018}.

\begin{theorem}
The constrained winner election problem for weakly separable rules is fixed parameter tractable with respect to the number of labels.
\end{theorem}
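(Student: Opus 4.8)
The plan is to reduce the problem to a mixed integer linear program with only $O(2^p)$ integer variables and then invoke integer programming in fixed dimension. Since the rule is weakly separable, the weak order $\succeq$ is exactly comparison of committee scores, and $\score(\cdot)$ is additive over candidates, so the task is to find a $k$-committee of maximum total score among those satisfying $\FC$. Group the candidates into \emph{types}, two candidates being of the same type when they lie in exactly the same labels; there are at most $2^p$ types $S\sub[p]$. For each type introduce an integer variable $n_S$ with $0\le n_S\le c_S$, where $c_S$ is the number of type-$S$ candidates. The number of committee members from $\lambda_i$ is then the linear form $N_i:=\sum_{S\ni i}n_S$, so an $\Interval(\cdot,\cdot,\lambda_i)$ constraint with bounds $\ell\le N_i\le u$ is linear, a $\Dominance(\lambda_i,\lambda_j)$ constraint is the inequality $N_i\ge N_j$, and the committee size is pinned down by $\sum_S n_S=k$.

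It remains to linearise the objective. Once the counts $n_S$ are fixed, additivity of the score implies that the best committee with those counts takes the $n_S$ highest-scoring candidates of each type, contributing $g_S(n_S)$, where $g_S(n)$ is the sum of the $n$ largest type-$S$ scores. Listing those scores in non-increasing order shows that the increments $g_S(n)-g_S(n-1)$ are non-increasing, so $g_S$ is concave; its piecewise-linear interpolant $\hat g_S$ on $[0,c_S]$ is therefore the pointwise minimum of the $c_S$ affine functions $\alpha_{S,j}t+\beta_{S,j}$ extending its pieces. Adding a continuous variable $y_S$ with the inequalities $y_S\le \alpha_{S,j}n_S+\beta_{S,j}$ for $j=1,\dots,c_S$ and maximising $\sum_S y_S$ then forces $y_S=\hat g_S(n_S)=g_S(n_S)$ at every integral value of $n_S$. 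Altogether this is a mixed integer linear program with $2^p$ integer variables, $2^p$ continuous variables, and $O(m+|\FC|)$ constraints whose optimum is the maximum score of an $\FC$-valid $k$-committee; that committee is read off by taking the top $n_S$ candidates of each type.

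Finally, apply Lenstra's algorithm for integer programming in fixed dimension --- in its mixed-integer form, exactly as in Theorem~10 of \cite{Bredereck2018} and in \cite{Faliszewski2018} --- which runs in time polynomial in the encoding length once the number of integer variables is fixed. Here that number is $2^p$, so the total running time is $f(p)\cdot\mathrm{poly}(m,n,|\FC|)$, which is fixed-parameter tractable in the number of labels $p$ (if the program is infeasible there is simply no valid committee).

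The main obstacle is precisely the objective: it is not linear in the counts $n_S$ but a sum of concave piecewise-linear pieces, and the reduction only goes through because ``take the top $n_S$ of each type'' is optimal and the resulting contribution of a type is concave, hence exactly representable by its supporting lines with one extra continuous variable and $c_S$ extra inequalities. Minor points to verify are that the cited machinery applies to \emph{mixed} programs with an unbounded number of continuous variables and constraints (standard), and that the coefficients $\alpha_{S,j},\beta_{S,j}$, being partial sums of scores, have polynomial bit-length. Note also that in the disjoint-label case the number of types drops to $p+1$, but this only re-derives the polynomial bound already implied by \autoref{thm:easyknapsack}; the content of the present theorem is that no structural restriction on the labels is needed.
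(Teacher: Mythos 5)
Your proposal is correct and follows essentially the same route as the paper: a mixed integer program whose only integer variables count the candidates taken from each Venn region (your ``types''), with the committee-size, interval, and dominance constraints expressed linearly in these counts, and the whole thing handed to Lenstra-style mixed-integer programming in fixed dimension to get fixed-parameter tractability in the number of labels. The only divergence is the linearisation of the objective: the paper uses $[0,1]$-valued indicator variables $s_{i,j}$ weighted by the sorted scores together with an exchange argument showing any fractional optimum can be made integral, whereas you encode the concave ``sum of the top $n_S$ scores'' function by its supporting lines with one continuous variable per type --- two equivalent devices exploiting the same concavity, so no gap.
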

\begin{proof}
Intuitively, if the number of labels is constant this problem can be solved by brute force. Observe that if $X$ is a committee satisfying constraints $\FC$, then if $x\in X$, $y\notin Y$, $\score(y)>\score(x)$ and for all labels $\lambda_i$, $x\in \lambda_i$ if and only if $y\in \lambda_i$, then $Y=(X\setminus\set{x})\cup\set{y}$ is also a committee satisfying the constraints, and $\score(Y)\geq\score(X)$. In other words if two candidates have the same labels, we will never violate a constraint by swapping a low scoring one for a high scoring one, and doing so will only increase the score.

Since a constant $p$ labels gives rise to a Venn diagram with a constant $q$ regions, we need only consider all ways of choosing the best $k$ candidates from every region, of which there are at most $k^q$.

To obtain a fixed parameter tractable algorithm, we will recast the intuition above as a mixed integer linear program. Such a program is fixed parameter tractable in $p$ if the number of integer variables is a function of $p$ alone \citep{Lenstra1983}.

Let $D_1,\dots,D_q$ enumerate the regions of the Venn diagram induced by $\Lambda$. Note that $q\leq 2^p$. Introduce the integer variables $\mathtt{d_1},\dots,\mathtt{d_q}$, the interpretation of $\mathtt{d_i}$ being the number of elements taken from $D_i$. Introduce the real indicator variables $(s_{i,j})_{i\leq q,j\leq k}$ with the interpretation that $s_{i,j}=1$ if and only if at least $j$ elements are taken from $D_i$..

The constant values $c_{i,j}$ will represent the score of the $j$th highest scoring candidate from $D_i$. The resulting system is in \autoref{fig:milp}.

\begin{figure}
\begin{align*}
\max\quad&\sum_{i\leq q}\sum_{j\leq k} s_{i,j}c_{i,j}&\\
\text{s.t.}\quad&&\\
    1)\quad&\sum_{i\leq q} \mathtt{d_i}=k,&\\
    2)\quad&\sum_{D_j\sub \lambda_i}\mathtt{d_j}\geq r,&\\
    \quad&\sum_{D_j\sub \lambda_i}\mathtt{d_j}\leq t,&\text{for all }\Interval(r,t,\lambda_i)\\
    3)\quad&\sum_{D_j\sub \lambda_i}\mathtt{d_j}\geq\sum_{D_j\sub \lambda_{t}}\mathtt{d_j},&\text{for all }\Dominance(\lambda_i,\lambda_{r})\\
    4)\quad&\sum_{i\leq q}\sum_{j\leq k}s_{i,j}=\mathtt{d_i},&\\
    5)\quad&0\leq s_{i,j}\leq 1,&i\leq q,j\leq k
\end{align*}
\caption{A mixed integer formulation of the constrained winner problem. Integer variables in typewriter font.}
\label{fig:milp}
\end{figure}

Constraint 1 ensures the committee is of size $k$, constraints 2 ans 3 ensure the satisfaction of $\FC$, and constraint 4 establishes the relation between the $\mathtt{d_i}$ variables and the objective function. Clearly if the system arrives at a solution where all $s_{i,j}$ are integral, we have found a maximal score $k$-committee.

Suppose then that the solution is such that there exists a $0<s_{i,j}<1$. Since $\sum_{j\leq k}s_{i,j}=\mathtt{d_i}$, and $\mathtt{d_i}$ is an integer, it follows there exists another $0<s_{i,j'}<1$. Without loss of generality, let $j'>j$. Since $c_{i,j}\geq c_{i,j'}$ by definition, the value of the solution will not decrease if we transfer the weight from $s_{i,j'}$ to $s_{i,j}$, after which we will have reduced the number of non-integral values by one. By repeating this process for each non-integral $s_{i,j}$, we will find an integral solution with the same value.
\end{proof}

\section{Conclusion}

We have introduced a framework for specifying constraints on committee membership, with the novelty of dominance constraints, and formulated the constrained election problem via set-extensions of the ranking function of a best-$k$ rule. This problem proved to be $\NP$-hard in the general case, but if the dominance constraints imposed a tree-like structure on the labels, a polynomial solution via dynamic programming is possible. We also demonstrated fixed-parameter tractability of the problem in the number of labels for weakly separable rules.

The study of constrained elections is new, and many questions remain open, but we feel a particularly interesting future direction would be to consider where these constraints come from. To date, works on the subject have assumed that constraints are given exogenously, and the tools of social choice are limited to finding a solution that fits; whereas presumably, constraints are set with a goal in mind. It would be interesting to integrate this goal into the problem itself.

\bibliographystyle{apalike}
\bibliography{references.bib}   %

\end{document}